\newtheorem{example}{Example}
\newtheorem{theorem}{Theorem}
\newtheorem{conjecture}[theorem]{Conjecture}
\title{Calcium: computing in exact real and complex fields}
\date{}
\author{Fredrik Johansson\footnote{Inria Bordeaux and Institut Math. Bordeaux -- \url{fredrik.johansson@gmail.com}}}
\begin{document}

\maketitle

\begin{abstract}
Calcium is a C library for real and complex numbers
in a form suitable for exact algebraic and symbolic computation.
Numbers are represented as elements of
fields $\mathbb{Q}(a_1,\ldots,a_n)$ where
the extensions numbers
$a_k$ may be algebraic or transcendental.
The system combines efficient field operations
with automatic discovery and certification of algebraic relations,
resulting in a practical computational model of $\mathbb{R}$ and $\mathbb{C}$
in which equality is rigorously decidable for a large class of numbers.
\end{abstract}

\section{Introduction}

A field $K$ is said to be \emph{effective}
if its elements can be enumerated
and the operations $\{+, -, \cdot, /, =\}$
are computable.
Examples include the rationals $\mathbb{Q}$, finite fields $\mathbb{F}_q$,
and the algebraic numbers $\overline{\mathbb{Q}}$.

The fields of real and complex numbers~$\mathbb{R}$ and~$\mathbb{C}$
are notably non-effective, even 
when restricted to so-called \emph{computable numbers}
(a real number $x$ is said to be \emph{computable} if
there is a program
which, given $n$, outputs some $x_n \in \mathbb{Q}$ with $|x - x_n| < 2^{-n}$).
The problem is that equality is only semi-decidable: we can in general prove $x \ne y$,
but not $x = y$, as a consequence of the halting theorem.
Nevertheless,
as the example of $\overline{\mathbb{Q}}$ shows,
we can hope for an equality test at least for \emph{some} numbers
within a suitable algebraic framework.

This paper presents Calcium,\footnote{Pronounced ``kalkium'' to distinguish it from
the chemical element. Calcium is free and open source
(LGPL 2.1+) software. The source repository is
\url{https://github.com/fredrik-johansson/calcium} and
the documentation is available at \url{http://fredrikj.net/calcium/}.} a C library
for exact computation in $\mathbb{R}$ and $\mathbb{C}$.
Numbers are represented as elements of fields
$\mathbb{Q}(a_1,\ldots,a_n)$ where the extension numbers $a_k$
are defined symbolically.
The system constructs fields and discovers algebraic relations
automatically,
handling algebraic and transcendental number fields in a unified way.
It is capable of deciding equality
for a wide class of numbers
which includes $\overline{\mathbb{Q}}$ as a subset.
We show a few basic examples, here using a Python wrapper:

\begin{small}
\begin{verbatim}
>>> (pi**2 - 9) / (pi + 3)
0.141593 {a-3 where a = 3.14159 [Pi]}
>>> phi = (sqrt(5)+1)/2; (phi**100 - (1-phi)**100)/sqrt(5)
3.54225e+20 {354224848179261915075}
>>> i**i - exp(pi / (sqrt(-2)**sqrt(2))**sqrt(2))
0
>>> log(sqrt(2)+sqrt(3)) / log(5+2*sqrt(6))
0.500000 {1/2}
>>> erf(4*atan(ca(1)/5) - atan(ca(1)/239)) + erfc(pi/4)
1
>>> -1e-12 < exp(pi*sqrt(163)) - 262537412640768744 < -1e-13
True
\end{verbatim}
\end{small}


In the first example, the field is $\mathbb{Q}(a)$ where $a = \pi$,
and the element is $a-3$.
The numerical approximation ($x \approx 0.141593$) is computed to
desired precision on demand, for example when printing
or evaluating a numerical predicate.
Examples 2--5 were chosen so that the field
of the result simplifies to $\mathbb{Q}$.

Such examples are within the scope of the
expression simplification tools in computer algebra systems like
Mathematica and Maple.\footnote{Open source systems
tend to perform much worse: SymPy, for example,
fails to simplify examples 3--5. Maple curiously also fails to simplify example 3.}
The key difference is that we
work with more structured representations;
we also handle numerical evaluation and predicates rigorously.
Our approach is inspired by
various earlier implementations of $\overline{\mathbb{Q}}$
and by theoretical work on transcendental fields.

This paper is structured as follows. Section~\ref{sect:algebraic}
presents our high-level strategy for exact computation, described at a general level
without reference to low-level implementation details.
Section~\ref{sect:calcium} discusses the architecture of Calcium.
Section~\ref{sect:related} relates our strategy
to earlier work and presents some benchmark results.

\section{Computing in subfields of $\mathbb{C}$}

\label{sect:algebraic}

To simulate $\mathbb{R}$ and $\mathbb{C}$,
we may start with $\mathbb{Q}$ and
lazily extend the field with new numbers $a_k$ as they arise
in computations.
A general way to compute in such extension fields of $\mathbb{Q}$
is in terms of quotient rings
and their fields of fractions (henceforth \emph{formal fields}).

In the following, we assume that $a_1,\ldots,a_n$ is a finite list of complex numbers.
We let $X_1,\ldots,X_n$ denote independent formal variables,
we let $\mu : \mathbb{Q}[X_1,\ldots,X_n] \to \mathbb{C}$ denote
the evaluation homomorphism
induced by the map $X_k \mapsto a_k$,
and we define $$I := \ker \mu = \{ f \in \mathbb{Q}[X_1,\ldots,X_n]: f(a_1,\ldots,a_n) = 0\}$$
as the ideal of all algebraic relations among $a_1,\ldots,a_n$ over $\mathbb{Q}$.

\begin{theorem}
Assume that $I$ is known (in the sense
that an explicit list of generators $I = \langle f_1, \ldots, f_m \rangle$ is known). Then
$$K := \mathbb{Q}(a_1,\ldots,a_n) \;\; \cong \;\; K_{\text{formal}} := \operatorname{Frac}(\mathbb{Q}[X_1,\ldots,X_n] / I)$$
is an effective subfield of $\mathbb{C}$.
\label{thm:effective}
\end{theorem}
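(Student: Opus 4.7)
The plan is to split the claim into an algebraic part (the isomorphism) and a computational part (effectiveness of the explicit presentation), and to handle them essentially independently.

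For the isomorphism, I would apply the first isomorphism theorem to the ring homomorphism $\mu$: its kernel is $I$ by definition and its image is the subring $\mathbb{Q}[a_1,\ldots,a_n] \subset \mathbb{C}$, yielding $\mathbb{Q}[X_1,\ldots,X_n]/I \cong \mathbb{Q}[a_1,\ldots,a_n]$. Since $\mathbb{C}$ is an integral domain, the image is also a domain, so $I$ is a prime ideal and the quotient is a domain; passing to fraction fields then gives $K_{\text{formal}} \cong \mathrm{Frac}(\mathbb{Q}[a_1,\ldots,a_n]) = K$. This part is entirely routine.

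For effectiveness, I would represent every element of $K_{\text{formal}}$ as a pair $(p,q)$ of polynomials in $\mathbb{Q}[X_1,\ldots,X_n]$ with $q \notin I$, which is plainly an enumerable collection of data. Given the generators $f_1,\ldots,f_m$, I would compute a Gr\"obner basis $G$ of $I$ with respect to some fixed monomial order, obtaining a computable normal form $\mathrm{NF}_G$ on $\mathbb{Q}[X_1,\ldots,X_n]$ such that $f \in I$ iff $\mathrm{NF}_G(f) = 0$. The five required operations then reduce to polynomial arithmetic plus one ideal-membership test: addition, subtraction, and multiplication of pairs are the usual cross-multiplication formulas; division $(p_1,q_1)/(p_2,q_2) := (p_1 q_2,\, q_1 p_2)$ is legal precisely when $\mathrm{NF}_G(p_2) \neq 0$; and equality $(p_1,q_1) = (p_2,q_2)$ becomes $\mathrm{NF}_G(p_1 q_2 - p_2 q_1) = 0$.

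The main obstacle, and really the only non-elementary ingredient, is the computability of Gr\"obner bases, i.e., decidability of ideal membership in $\mathbb{Q}[X_1,\ldots,X_n]$; I would invoke Buchberger's algorithm as a black box for this. I would also emphasise that the theorem is only as constructive as its hypothesis: it assumes a finite list of generators of $I$ is already given, whereas producing such a list when the $a_k$ are transcendental or mixed is the genuinely hard problem that the remainder of the paper is devoted to.
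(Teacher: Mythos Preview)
Your proposal is correct and follows essentially the same approach as the paper's own proof, which dispatches the isomorphism as ``obvious'' and reduces decidability of equality (and well-definedness of fractions) to ideal membership via a Gr\"obner basis for $I$. You have simply spelled out in full the two steps the paper leaves terse---the first isomorphism theorem for $\mu$ and the cross-multiplication test $ps \equiv qr \bmod I$---so there is no substantive difference in strategy.
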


\begin{proof}
The isomorphism is obvious.
Decidability of ``='' in the formal
field follows from the fact that we can compute a Gr\"{o}bner basis for~$I$.
Given
formal fractions $\tfrac{p}{q}$ and $\tfrac{r}{s}$ with
$p,q,r,s \in \mathbb{Q}[X_1,\ldots,X_n]$,
we can consequently decide whether $ps \equiv qr \bmod I$.
Indeed, we can also decide whether $q, s \not \equiv 0 \bmod I$ and thereby ensure that
the fractions define numbers in the first place.
\end{proof}

Some easy special cases are worth noting:

\begin{itemize}
\item The trivial field $K = \mathbb{Q}$ (take $n = 0$).
\item Transcendental number fields
$K = \mathbb{Q}(a_1,\ldots,a_n)$
where the numbers $a_1,\ldots,a_n$ are
algebraically independent over $\mathbb{Q}$.
\item Algebraic number fields $K = \mathbb{Q}(a) \; \cong \; \mathbb{Q}[X] / \langle f(X)\rangle$
where $a$ is an algebraic number with minimal polynomial $f$.
\end{itemize}

The general case is a \emph{mixed field} in which the extension
numbers may be algebraic or transcendental and algebraically
dependent or independent in any combination.

\begin{example}
$\mathbb{Q}(\log(i),\pi,i) \cong \operatorname{Frac}(\mathbb{Q}[X_1,X_2,X_3] / I)$
where $I = \langle 2 X_1 - X_2 X_3, \, X_3^2+1 \rangle$.
\end{example}

Theorem~\ref{thm:effective} solves the arithmetic part of computing in
finitely generated
subfields of $\mathbb{C}$, at least up to practical issues such as the complexity
of multivariate polynomial arithmetic and Gr\"{o}bner basis computations.
The crucial assumption made in Theorem~\ref{thm:effective}, however, is that
the ideal $I$ is known.
In general, finding $I$ is an extremely hard problem.
For example, although $\mathbb{Q}(\pi) \cong \mathbb{Q}(X_1)$
and $\mathbb{Q}(e) \cong \mathbb{Q}(X_2)$, it is an open
problem to prove $\mathbb{Q}(\pi,e) \cong \mathbb{Q}(X_1,X_2)$.
There are specific instances where we can prove algebraic
independence (the Hermite-Lindemann-Weierstrass theorem,
Baker's theorem, transcendence of isolated
numbers such as $\Gamma(\tfrac{1}{4})$, results
for $E$-functions~\cite{FR2019}, etc.), but we typically only
have conjectures. Most famous (and implying
$\mathbb{Q}(\pi,e) \cong \mathbb{Q}(X_1,X_2)$ as a special case) is:

\begin{conjecture}[Schanuel's conjecture]
If $z_1,\ldots,z_n$ are linearly independent over $\mathbb{Q}$,
then $\mathbb{Q}(z_1,\ldots,z_n,e^{z_1},\ldots,e^{z_n})$ has
transcendence degree at least $n$ over $\mathbb{Q}$.
\end{conjecture}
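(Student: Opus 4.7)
The statement is Schanuel's conjecture, an open problem since the 1960s, so I cannot honestly propose a proof; the best I can do is describe the natural line of attack and the obstruction that has blocked it for six decades. The only solid special case is when $z_1,\ldots,z_n$ are algebraic, where the conjecture reduces to the Lindemann-Weierstrass theorem and yields to classical transcendence methods.

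My plan would be to imitate the Gelfond-Baker auxiliary function method. First, assume for contradiction that the transcendence degree of $\mathbb{Q}(z_1,\ldots,z_n,e^{z_1},\ldots,e^{z_n})$ over $\mathbb{Q}$ is at most $n-1$; encode the dependence as the vanishing of a nonzero polynomial $P \in \mathbb{Z}[Y_1,\ldots,Y_{2n}]$ at the point $(z_i,e^{z_i})$. Next, for large integer parameters $N$ and $T$, construct an auxiliary entire function $F$ on $\mathbb{C}^n$ built from monomials in $w_1,\ldots,w_n$ multiplied by exponentials $e^{(\lambda_1 z_1 + \cdots + \lambda_n z_n)w_j}$, with small integer coefficients chosen by Siegel's lemma so that $F$ vanishes to prescribed order at many lattice translates of a distinguished point. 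Invoke a zero estimate of Philippon or Masser-W\"{u}stholz to force vanishing at further points, and extrapolate. The intended endgame is to exhibit a nonzero algebraic integer that is simultaneously analytically tiny (from upper bounds on $F$ and its derivatives) and arithmetically not too small (from a Liouville-type lower bound), producing the contradiction.

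The step that fails completely is the arithmetic lower bound. When the $z_i$ are algebraic, Liouville's inequality gives a height-dependent positive lower bound on any nonzero algebraic integer in the relevant number field; when the $z_i$ are allowed to be transcendental, no such inequality is available, because the critical value is a polynomial expression in quantities for which no nontrivial measure of irrationality is known. Every route I can see to supplying such a bound seems to require an input that would already imply substantial cases of the conjecture. Ax's theorem (1971) proves the exact analog for formal power series by differential-algebraic methods that crucially exploit a nontrivial derivation on the function field; the obstacle is precisely that $\mathbb{C}$ is not equipped with a useful derivation compatible with $\exp$.

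Given this state of affairs, the realistic role of Schanuel in the paper is not as a target of proof but as a \emph{conditional hypothesis}: under it, a wide class of equality questions about closed-form expressions in $\pi$, $e$, and iterated $\log$ becomes decidable, and the heuristic completion used by Calcium to discover relations becomes conditionally complete. Crucially, the soundness of Calcium's equality decisions does not depend on the conjecture, since every certified equality is witnessed by an explicit ideal membership as in Theorem~\ref{thm:effective}.
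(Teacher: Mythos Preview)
Your assessment is correct and matches the paper exactly: the statement is presented as a \emph{conjecture}, and the paper offers no proof whatsoever. Your final paragraph in particular captures precisely the role Schanuel's conjecture plays in the paper---it is invoked only as a conditional hypothesis underpinning the (hoped-for) completeness of the relation-finding heuristics for elementary numbers, while the soundness of certified equalities rests on explicit ideal membership and does not depend on it.
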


Thus, in general, we can only determine $I$ conjecturally.
We address this limitation below in section~\ref{sect:incompleteideal}.

\subsection{Defining extension numbers}

We stress that we cannot simply input $I$ as a way to \emph{define}
the extension numbers $a_1,\ldots,a_n$, since this does not
give enough information about the embedding (that is, $\mu$) in $\mathbb{C}$.
We are not interested in computing in an abstract algebraic
structure but in a concrete model of $\mathbb{R}$
and $\mathbb{C}$ where
we can do (at least) the following:
\begin{itemize}
\item Evaluate the complex conjugation map $z \to \overline{z}$.
\item Evaluate numerical ordering relations ($x < y$, $|x| < |y|$, etc.).
\item Exclude singularities (e.g.\ division by zero) and choose well-defined branches of multivalued functions.
\end{itemize}
We therefore need a symbolic way to define extension numbers
$a_1,\ldots,a_n$ so that they are explicitly (numerically) computable,
and we need to construct $I$ from this symbolic data
rather than vice versa.
The following types of extensions are useful:

\begin{itemize}
\item \emph{Absolute algebraic}: $a$ is a fixed algebraic constant $a \in \overline{\mathbb{Q}}$, for example
$i$, $\sqrt{2}$, $e^{2 \pi i / 3}$, or $[a^5-a-1=0; \, a\approx 1.17]$.
Such a constant can be defined canonically by its
minimal polynomial over $\mathbb{Q}$ together with an isolating
ball for a root.
\item \emph{Relative algebraic}: $a$ is defined by an equation $a^{m/n} = c$ with $c \in \mathbb{C}$, or $P(a) = 0$ with $P \in \mathbb{C}[X]$ together with an isolating complex ball for a root.
\item \emph{Transcendental}: $a$ is a symbolic transcendental (or conjecturally transcendental)
constant ($\pi$, $\gamma$, etc.) or function
($e^z$, $\log(z)$, $z^w$, $\Gamma(z)$, $J_{\nu}(z)$, etc.) evaluated at some point.
\item \emph{Black-box computable}: $a$ is defined by a program for numerical evaluation in ball arithmetic. (We will not be able to prove algebraic relations
except self-relations like $a - a = 0$.)
\end{itemize}

Calcium presently supports extension numbers of the first three types.
We represent algebraic and transcendental extensions
in the usual way as symbolic expressions $f(z_1,\ldots,z_p)$.
The arguments $z_1, \ldots, z_p$ are real or complex numbers
which may belong to different fields, say $z_1 \in \mathbb{Q}(b_1,\ldots,b_r)$,
$z_2 \in \mathbb{Q}(c_1,\ldots,c_s)$, etc.

Each extension number defines a computable number
through recursive numerical evaluation of the symbolic function
and its arguments in arbitrary-precision ball arithmetic.
This is at least true in principle
assuming that we can decide signs at discontinuities.
An important improvement
over many symbolic computation systems
is that we exclude non-numerical extension numbers: for example,
when adding $\log(z)$ as an extension number, we must be
able to prove $z \ne 0$.
We fix principal branches of all multivalued functions.

This is only a starting point: we can imagine other classes of extensions (periods, solutions of
implicit transcendental equations, etc.).
The main point
is not the precise internal classification but the
logical separation between field elements
and extension numbers.

\subsection{Working with an incomplete ideal}

\label{sect:incompleteideal}

As already noted, it is often not feasible
to find the ideal $I$ necessary to define
a formal field isomorphic to $\mathbb{Q}(a_1,\ldots,a_n)$.
Even in cases where all relations in $I$ in principle can be determined,
they may be costly to compute explicitly,
for instance when they involve algebraic
extensions of even moderately high degree.

Fortunately, it is usually sufficient to construct a partial ideal $I_{\text{red}} \subseteq I$.
We call this the \emph{reduction ideal}
since it typically helps keeping expressions partially reduced
(allowing for efficient computations) even if $I_{\text{red}} \ne I$.
The reason why we do not need to ensure $I_{\text{red}} = I$ is that we can use
the evaluation map $\mu$ (implemented in ball arithmetic)
as a witness of nonvanishing for particular field elements.
Algorithm~\ref{alg:decision} provides
a template for evaluating predicates,
given a possibly incomplete reduction ideal $I_{\text{red}}$.

The algorithm uses a \emph{work parameter} $W$. This can
be taken as a numerical precision in bits for step (c), say with
$W_{\text{min}} = 64$ and $W_{\text{max}} = 4096$
and an implied doubling of $W$ on each iteration.
We explain the meaning of ``heuristics with strength $W$'' in step (d) below in section~\ref{sect:ideal}.
If we take $W_{\text{max}} = \infty$ to force a True/False answer,
then termination when $z = 0$ is conditional
on the asymptotic completeness of the methods to find relations in (d).

\begin{algorithm}
\caption{Test if $z = 0$.}

\SetAlgoLined

\textbf{Input:} Extension numbers $a_1,\ldots,a_n$, an element $z \in \mathbb{Q}(a_1,\ldots,a_n)$
represented by a formal fraction $p / q$ with $p, q \in \mathbb{Q}[X_1,\ldots,X_n]$ (such that $\mu(q) \ne 0$),
a reduction ideal $I_{\text{red}} \subseteq I$, and work limits $W_{\text{min}}$, $W_{\text{max}}$.

\textbf{Output:} True (implying $z = 0$), False ($z \ne 0$), or Unknown.

\vspace*{-0.15cm}
\hrulefill
\vspace*{-0.1cm}

\begin{enumerate}

\item For $W = W_{\text{min}}, \ldots, W_{\text{max}}$, do:
\begin{enumerate}
\item If $p \equiv 0 \bmod I_{\text{red}}$, return True.
\item If it can be certified that $I_{\text{red}} = I$, return False.
\item Using ball arithmetic with strength $W$, compute an enclosure $E$ with $\mu(p) \in E$. If $0 \not\in E$, return False.
\item Using heuristics with strength $W$, attempt to find and prove a new set of relations $J$ with $J \subseteq I$, and set $I_{\text{red}} \gets I_{\text{red}} \cup J$. (See Algorithm~\ref{alg:ideal}.)
\end{enumerate}
\item Return Unknown.
\end{enumerate}

\label{alg:decision}
\end{algorithm}

Step (b) is applicable, for example, in simple algebraic or transcendental number
fields such as $\mathbb{Q}(\sqrt{2})$ and $\mathbb{Q}(\pi)$.

In step (d), we may try to find a general relation
for $a_1,\ldots,a_n$,
or we may attempt to prove $\mu(p) = 0$ directly
and take $J = \langle p \rangle$. The latter is sometimes easier.
For example, if $a_1,\ldots,a_n$ are algebraic extension numbers,
it is often cheaper to compute the minimal polynomial
specifically for $z$ than to compute all of $I$.

It is an implementation detail whether we cache the updated
ideal $I_{\text{red}}$ for future use
in the same field after exiting Algorithm~\ref{alg:decision}.

\subsection{Constructing the ideal}

\label{sect:ideal}

We will now describe a practical strategy to construct a reduction ideal
$I_{\text{red}}$ for a given field $K = \mathbb{Q}(a_1,\ldots,a_n)$.

Which relations are interesting to include, and when?
The minimalist solution is that we set $I_{\text{red}} = \{\}$
when we construct $K$, and only populate $I_{\text{red}}$
lazily in Algorithm~\ref{alg:decision}.
The maximalist solution is to ensure
$I_{\text{red}} = I$ up front.
There is a tradeoff: on one hand, we want to capture as much of 
the true ideal $I$ as possible so that testing equality is trivial
and so that there is minimal expression swell in computations.
On the other hand, we do not want to waste time
finding potentially useless relations and computing Gr\"{o}bner bases every time we construct
a field.
As in Algorithm~\ref{alg:decision}, it is useful to
make the effort dependent on a work parameter $W$
controlling numerical precision, choice of heuristics,
and so forth.

Algorithm~\ref{alg:ideal} implements a smorgasbord of methods for
finding relations, most of which involve
searching for (linear) integer relations.
We recall that an integer relation between
complex numbers $a_1,\ldots,a_n$ is a tuple $(m_1, \ldots, m_n)$ with some $m_k \ne 0$
such that
$$m_1 a_1 + \ldots m_n a_n = 0, \quad m_i \in \mathbb{Z}.$$

The LLL algorithm can be used to compute a basis
matrix for all integer relations among
a finite list of numbers; see for example Algorithm~7.13 in \cite{Kau2005}.
More precisely, LLL finds a basis of candidate
relations which may or may not be correct.
We are guaranteed to find all integer relations
as $W \to \infty$ where $W$ is the numerical precision,
but we have to use exact computations to certify or reject the relations
obtained at a fixed finite $W$. Since the certifications can
be expensive, it is useful to make them dependent on $W$ (for
example, limiting bit sizes of field elements in recursive computations).

\begin{algorithm}
\caption{Construct ideal of algebraic relations.}

\SetAlgoLined

\textbf{Input:} Extension numbers $a_1,\ldots,a_n$, a work parameter $W$.

\textbf{Output:} A reduction ideal $I_{\text{red}} \subseteq I$ for $\mathbb{Q}(a_1,\ldots,a_n)$.

\vspace*{-0.15cm}
\hrulefill
\vspace*{-0.05cm}

Initialize $I_{\text{red}} \gets \{ \}$.
Depending on $W$, run a subset of A-F:

\begin{enumerate}[{A}]
\item \textbf{Direct algebraic relations}. For absolute or relative algebraic
extensions $a_k$, add the defining relations to $I_{\text{red}}$.

\item \textbf{Vieta's formulas}. For algebraic extensions $a_k$ that are conjugate roots
of the same polynomial, add the interrelations defined by Vieta's formulas to $I_{\text{red}}$.

\item \textbf{Log-linear relations}. Let $L$ denote the set of extension numbers of the form $a_k = \log(z_k)$,
along with $\pi i$ if available.
Use LLL with precision $W$ to search for relations $\sum_j m_j \log(z_j) = 0$ or $m_0 (2 \pi i) + \sum_j m_j \log(z_j) = 0$.
Attempt to certify each candidate relation:
\begin{itemize}
\item Compute an enclosure of $\tfrac{1}{2 \pi i} \sum_j m_j \log(z_j)$ and verify that it contains a unique integer.
\item Attempt to prove $\prod_j {z_j}^{m_j} = 1$ using Algorithm~\ref{alg:decision} (using exact recursive computations in the fields of the arguments $z_j$).
\item If both certification steps succeed, update the ideal with $I_{\text{red}} \gets I_{\text{red}} \cup \langle m_0 (2 \pi i) + m_1 a_1 + \ldots + m_n a_n \rangle$.
\end{itemize}
\item \textbf{Exp-multiplicative relations}.
Let $E$ denote the set of extension numbers of the form $a_k = \sqrt{z_k}$, $a_k = {z_k}^{m/n}$, $a_k = {z_k}^{w_k}$, $a_k = e^{z_k}$ or $a_k \in \overline{\mathbb{Q}}$.
Search for potential multiplicative relations $\prod_j {a_j}^{m_j} = 1$ using LLL applied to $\log(E)$ and certify the candidate relations through exact recursive computations similarly to the log-linear case.

\item \textbf{Special functions}. Update $I_{\text{red}}$ with relations
resulting
from functional equations and connection formulas such as $\Gamma(z+1) = z \Gamma(z)$
or $\operatorname{erf}(z) = -\operatorname{erf}(-z) = -i \operatorname{erfi}(i z)$.
Candidate relations can be found by numerical comparison of function arguments
and certified through exact recursive computations.

\item \textbf{Algebraic interrelations}. Use resultants or LLL (followed by certification
using resultants) to search for linear (or bilinear, etc.) relations
among algebraic extensions.

\end{enumerate}

\label{alg:ideal}
\end{algorithm}

Algorithm~\ref{alg:ideal} will only find relations
that are expressible in terms of the given
$a_1,\ldots,a_n$.
For example, to add the relation
for a square root extension $a_k = \sqrt{z}$ in step A,
we need to be able to express $z$ in terms
of $K' = \mathbb{Q}(a_1,\ldots,a_{k_1},a_{k+1},a_n)$
as a formal fraction $f/g$ with $\mu(f / g) = z$.
The relation is then $\langle g^2 X_k^2 - f^2 \rangle$.
If $z$ cannot be expressed in $K'$, then $a_k$
behaves like a transcendental number within the present field.
However, it is usually desirable to make $z$ part of the field
so that ideal reduction automatically produces $(\sqrt{z})^2 \rightarrow z$.
One possibility is that we always adjoin $z$ (or $b_1,\ldots,b_m$
such that $z \in \mathbb{Q}(b_1,\ldots,b_m)$) to the field
where we create $\sqrt{z}$.
An alternative is to modify Algorithm~\ref{alg:ideal} so that it
can append new extension numbers to the existing field
whenever it may help simplifications.

We will not attempt to prove the completeness of
Algorithm~\ref{alg:ideal} for any particular sets of numbers here
(see section~\ref{sect:related} for a few remarks).
We are constrained by the requirement
that potential relations have to be certifiable:
it makes no sense to look for a hypothetical relation that we will
not be able to prove (say, $m \pi + n e = 0$ with $m, n \in \mathbb{Z}$).\footnote{We can imagine an optional ``nonrigorous mode''
similar to the algorithm in \cite{BBK2014} which looks for
numerical integer relations and uses them to simplify
symbolic expressions without guaranteeing correctness.}

\subsection{Choosing extension numbers}

We have so far assumed that the extension numbers $a_1,\ldots,a_n$
are given. We usually have a great deal of freedom to
choose the form of extension numbers to represent a given field $K$. The following are some possible
transformations that either generate a new representation of $K$ itself,
generate a larger field $K' \supseteq K$, or generate a subfield or overlapping field:

\begin{itemize}
\item \emph{Normalization}: replacing an extension number by a simpler (by some measure) generator of the same field.  Example: $\mathbb{Q}(-\tfrac{5}{3}\sqrt{8}) \to \mathbb{Q}(\sqrt{2})$, $\mathbb{Q}(e^{-\pi}) \to \mathbb{Q}(e^{\pi})$.
\item \emph{Pruning}: removing redundancy. Ex.: $\mathbb{Q}(-\sqrt{2},\sqrt{2}) \to \mathbb{Q}(\sqrt{2})$.
\item \emph{Unification}: replacing extensions by a common generator. Ex.: $\mathbb{Q}(\sqrt{2},\sqrt{3}) \to \mathbb{Q}(\sqrt{2}+\sqrt{3})$, $\mathbb{Q}(\pi^{1/2}, \pi^{1/3}) \to \mathbb{Q}(\pi^{1/6})$.
\item \emph{Specialization}: simplifying special cases. Ex.: $\mathbb{Q}(e^0) \to \mathbb{Q}$, $\mathbb{Q}(e^{\log(z)}) \to \mathbb{Q}(z)$ and $\mathbb{Q}(\log(e^z)) \to \mathbb{Q}(z, \pi, i)$.
\item \emph{Atomization}: rewriting an extension in terms of more ``atomic'' parts. Ex.: $\mathbb{Q}(\sqrt{2}+\sqrt{3}) \to \mathbb{Q}(\sqrt{2},\sqrt{3})$,
$\mathbb{Q}(e^{x+y}) \to \mathbb{Q}(e^x, e^y)$, and $\mathbb{Q}(\log(xy)) \to \mathbb{Q}(\log(x), \log(y), \pi, i)$.
\item \emph{Function replacement}: rewriting a function in terms of a different function or combination of functions. Ex.: $\mathbb{Q}(\sin(x)) \to \mathbb{Q}(e^{ix}, i)$, $\mathbb{Q}(e^{x+yi}) \to \mathbb{Q}(e^x, \cos(y), \sin(y), i)$.
\end{itemize}

The problem of choosing appropriate extension numbers arises in various situations:

\begin{itemize}
\item Evaluating functions and solving equations: for example, given $z$, construct a field to represent $\sqrt{z}$ or $e^z$.
\item Merging fields, especially for arithmetic: given $z_1 \in K_1 = \mathbb{Q}(a_1,\ldots,a_n)$ and $z_2 \in K_2 = \mathbb{Q}(b_1,\ldots,b_m)$, compute a field $K_3$ containing $z_3 = z_1 \circ z_2$ where $\circ$ is an arithmetic operation.
\item Simplifying a single element (or finite list of elements): given $z \in K$, construct $K' \subseteq K$ with $z \in K'$ that is better suited for deciding a predicate, user output, numerical evaluation, etc.
\end{itemize}

We can attempt to set reasonable defaults, but a useful system
should probably allow the user to make intelligent choices.
It is very difficult to define meaningful canonical forms for general
symbolic expressions, and the optimal
form often depends on the application~\cite{Mos1971,Car2004}.
A classical problem is whether it makes sense to expand $(\pi+1)^{1000}$ in $\mathbb{Q}(\pi)$
or whether the result should be represented in $\mathbb{Q}((\pi+1)^{1000})$
(in Calcium, this is configurable).
Although atomization intuitively simplifies extensions,
having more variables
can slow down the task of constructing
the ideal and performing operations in the formal field,
and in any case the choice of ``atoms'' is often somewhat arbitrary.

We follow a conservative approach in Calcium so far: merging fields simply
takes the union of the generators,
evaluating functions or creating algebraic numbers
only normalizes or specializes in trivial cases,
and automatic pruning is mainly done to demote rational numbers
to $\mathbb{Q}$. In the future, we intend to implement different
behaviors and make them configurable, allowing the user to choose different
``flavors'' of arithmetic (for example, always unifying algebraic numbers to
a single extension, always separating complex numbers into
real and imaginary parts, etc.).

Algorithm~\ref{alg:ideal} 
is notably missing heuristics for
trigonometric functions
and complex parts (real part, imaginary part, sign,
absolute value).
We can write trigonometric functions and their inverses
in terms of complex exponentials and logarithms,
and complex parts in terms of algebraic operations
and recursive complex conjugation or separation of real
and imaginary parts,
but this is not always appropriate, particularly when we end up using
complex extensions to describe a real field.
We leave this problem for future work.

\subsection{On formal field arithmetic}

We conclude this section with
some practical comments
about implementing formal fields
$\operatorname{Frac}(\mathbb{Q}[X_1,\ldots,X_n] / I)$.

\subsubsection{Normal forms of fractions}

When computing in formal fraction fields,
we face a difficulty
which does not arise when merely considering
quotient rings
$\mathbb{Q}[X_1,\ldots,X_n] / I$:
a formal fraction $p / q$ need not be
in a joint canonical form even if $p$ and $q$ are in canonical
form with respect to $I$.\footnote{For the present discussion, it does
not matter whether we have $I_{\text{red}} = I$.}
A simple example is that $1/\sqrt{2} = \sqrt{2}/2$.
This is harmless for deciding equality since
reduction by $I$ will give a zero numerator of
$p / q - r / s = (ps - rq)/(pq)$ for equivalent fractions $p/q$ and $r/s$.
However, $p$ and $q$ can have nontrivial
common content in $\mathbb{Q}[X_1,\ldots,X_n] / I$
even if they are coprime in $\mathbb{Q}[X_1,\ldots,X_n]$,
and failing to remove such content can result in expression swell.
This problem manifests itself, for example, in Gaussian elimination.

In special cases, it is possible to find
content by computing polynomial GCDs over an algebraic number field
instead of over $\mathbb{Q}$.
Monagan and Pearce \cite{Mon2006} provide
an algorithm that solves the general problem of simplifying
fractions modulo an arbitrary (prime) ideal.
Their algorithm uses Gr\"{o}bner bases over modules.
We have not yet implemented this method in Calcium,
and only remove content in $\mathbb{Q}[X_1,\ldots,X_n]$
from formal fractions
(except in the special case of simple algebraic number fields,
where we compute a canonical form by rationalizing the denominator).

\subsubsection{Orderings}

The choice of \emph{monomial order} (\emph{lex}, \emph{deglex}, \emph{degrevlex}, etc.),
for multivariate polynomials in formal fields
can have a significant impact on efficiency and simplification power.
Closely related is the \emph{extension number order}: we typically
want to sort the extension numbers
in order of decreasing
complexity $a_1 \succ a_2 \succ \ldots \succ a_n$
for lexicographic elimination.
The notion of complexity is somewhat
arbitrary, but typically for any symbolic function $f$ and
any $z$, we want $f(z) \succ z$. For a discussion
of the problem of ordering symbolic expressions, see \cite{Mos1971,Car2004}.

Overall, \emph{lex} monomial ordering often seems to perform best
due to its tendency to completely eliminate extension numbers of higher complexity,
and it is used by default in Calcium,
although degree orders sometimes lead to cheaper
Gr\"{o}bner basis computations and overall simpler polynomials.
Calcium currently uses a hardcoded comparison function for
extension numbers, but we intend
to make it configurable or context-dependent.
A more sophisticated system might use heuristics
to choose an appropriate extension number order and monomial order (including
weighted and block orders) for each extension field.

\section{Architecture of Calcium}

\label{sect:calcium}

In this section, we describe the design of Calcium
as a library and discuss certain low-level implementation
aspects.

We chose to implement Calcium as a C library
to minimize dependencies.
Calcium includes a simple, unoptimized Python wrapper (using \texttt{ctypes})
intended for easy testing.

Calcium depends on
Arb~\cite{Joh2017} for arbitrary-precision ball arithmetic,
Antic~\cite{Har2015} for arithmetic in algebraic number fields,
and Flint~\cite{Har2010} for rational numbers,
multivariate polynomials and other functionality such as factoring and LLL.
A central idea behind Calcium is to leverage
these libraries for fast in-field arithmetic combined with rigorous evaluation of numerical predicates.
At present, we use a naive implementation of Buchberger's algorithm for
Gr\"{o}bner basis computation, which can be a severe bottleneck.

\subsection{Numbers and context objects}

The main types in Calcium are context objects (\texttt{ca\_ctx\_t})
and numbers (\texttt{ca\_t}). The context object
is the parent object for a 
``Calcium field'', representing a lazily expanding subset of $\mathbb{C}$.
It serves two purposes: it holds a cache of extension numbers
and fields, and it specifies work limits and other settings.
Examples of configurable parameters in the context object include:
the maximum precision for numerical evaluation,
precision for LLL,
the degree of algebraic number fields,
use of Gr\"{o}bner bases, use of Vieta's formulas,
the maximum $N$ for in-field expansion of $(x+y)^N$.
The user may create different contexts configured for different purposes.

The main Calcium number type, \texttt{ca\_t}, holds a pointer
to a field~$K$ and an element of $K$.
As in GMP, Flint and Arb, \texttt{ca\_t} variables have mutable semantics
allowing efficient in-place operations.
Internally, \texttt{ca\_t} uses one of three possible storage types
for field elements:
\begin{itemize}
\item A Flint \texttt{fmpq\_t} if $K = \mathbb{Q}$.
\item An Antic \texttt{nf\_elem\_t} if $K = \mathbb{Q}(a), a \in \overline{\mathbb{Q}}$ is a simple algebraic number field. There are two storage sub-types: Antic uses a specialized inline representation for quadratic fields.
\item A rational function \texttt{fmpz\_mpoly\_q\_t} (implemented as a pair of Flint multivariate polynomials \texttt{fmpz\_mpoly\_t}) if $K$ is a generic (multivariate or non-algebraic) field. Arithmetic in this representation relies on the Flint functions for multivariate arithmetic, GCD, and ideal reduction. Some functions also use Flint's multivariate polynomial factorization.
\end{itemize}

Caching field data in a context object rather than storing the
complete description of a field in each \texttt{ca\_t} variable
is essential for performance: creating new fields can be
expensive; repeated operations and creation of elements within a field
should be cheap.

Calcium is threadsafe as long as two threads never
access the same context object simultaneously. The user can most easily
ensure this by creating separate context objects for each thread.
For fine-grained parallelism, it is most convenient to
convert elements to simpler types such as polynomials.
Some of the underlying polynomial and matrix operations
are parallelized in Flint.

\subsection{Fields and extension numbers}

Separate types are used internally
in the recursive construction of fields.
A \texttt{ca\_ext\_t} object
defines an extension number.
This can be an algebraic number (see below)
or a symbolic constant or function of the form $f(x_1,\ldots,x_n)$
where $x_k$ are \texttt{ca\_t} arguments and $f$ is a builtin
symbol (\texttt{Pi}, \texttt{Exp}, etc.).
A \texttt{ca\_field\_t} object represents a field
$\mathbb{Q}(a_1,\ldots,a_n)$
as an array of pointers to the \texttt{ca\_ext\_t} objects
$a_1,\ldots,a_n$.
Field objects also store computational
data such as the reduction ideal.
Unlike field elements, fields and extension numbers are in principle immutable,
but cached data may be mutated internally: for example, extension numbers
cache Arb enclosures
and update this data when the internal working precision is increased.

The \texttt{ca\_ctx\_t} context object stores
\texttt{ca\_ext\_t} and \texttt{ca\_field\_t} objects
without duplication
in hash tables for fast lookup.
Presently, the context object holds on to all data
until it is destroyed by the user. For applications where memory usage could become
an issue, an improvement would be to add
automatic garbage collection.

\subsection{Canonical algebraic numbers}

Calcium contains
a type \texttt{qqbar\_t} which represents an algebraic number by its 
minimal polynomial over~$\mathbb{Q}$ together with an
isolating complex interval for a root.
Elements of $\overline{\mathbb{Q}}$ are thus represented canonically,
whereas a \texttt{ca\_t} allows many different representations.

The \texttt{qqbar\_t} type is used internally to represent
absolute algebraic extension numbers
and as a fallback
to simplify or test equality of algebraic numbers when Algorithm~\ref{alg:ideal}
fails to find a sufficient reduction ideal.
We thus have a complete test for equality in $\overline{\mathbb{Q}}$.

An arithmetic operation in the \texttt{qqbar\_t} representation involves three steps:
resultant computation (using the BFSS algorithm~\cite{Bos2006}), factoring in $\mathbb{Z}[x]$ (using the van Hoeij algorithm in Flint),
and maintenance of the root enclosure (using interval Newton iteration and other methods based on Arb).
Factoring
nearly always dominates, and this is usually much more expensive
than a \texttt{ca\_t} operation in a fixed number field.
Nevertheless, \texttt{qqbar\_t} performs better than \texttt{ca\_t}
in some situations
and does not require a context object, making it a useful implementation of $\overline{\mathbb{Q}}$
in its own right.

\subsection{Polynomials and matrices}

Calcium provides types \texttt{ca\_poly\_t} and \texttt{ca\_mat\_t}
for representing dense univariate polynomials and matrices over $\mathbb{R}$
or $\mathbb{C}$.
They support arithmetic, predicates,
polynomial GCD and squarefree factorization (using the Euclidean algorithm),
matrix LU factorization, rank and inverse (using ordinary and fraction-free Gaussian elimination),
determinant and characteristic polynomial,
and computing roots or eigenvalues with
multiplicities. Most algorithms are basic,
and optimization could be an interesting future project.

\subsection{Predicates and special values}

There are two kinds of predicate functions: structural
and mathematical. The structural version
of the predicate $x = y$ for \texttt{ca\_t} variables
asks whether $x$ and $y$ contain identically
represented elements
of the same field. This is cheap to check and
gives True/False.
The mathematical predicate asks whether $x$ and $y$ represent
the same complex number.
This is potentially expensive, if not undecidable, and gives True/False/Unknown.
We anticipate that applications using Calcium from a high-level
language will prefer to return True/False and throw an exception for Unknown.
The included Python wrapper does precisely this.
We illustrate matrix inversion:

\begin{small}
\begin{verbatim}
>>> ca_mat([[1, pi], [0, 1/pi]]).inv()          # nonsingular
[1, -9.86960 {-a^2 where a = 3.14159 [Pi]}]
[0,     3.14159 {a where a = 3.14159 [Pi]}]
>>> ca_mat([[pi, pi**2], [pi**3, pi**4]]).inv()    # singular
  ...
ZeroDivisionError: singular matrix
>>> ca_mat([[1, 0], [0, 1-exp(exp(-10000))]]).inv() # unknown
  ...
ValueError: failed to prove matrix singular or nonsingular
>>> ca_mat([[pi, pi**2], [pi**3, pi**4]]).det()
0
>>> ca_mat([[1, 0], [0, 1-exp(exp(-10000))]]).det()
0e-4342 {-a+1 where a = 1.00000 [Exp(1.13548e-4343 {b})],
                    b = 1.13548e-4343 [Exp(-10000)]}
>>> _ == 0
  ...
ValueError: unable to decide predicate: equal
\end{verbatim}
\end{small}

(The third matrix can be inverted by raising the precision limit.)

Like IEEE 754 floating-point arithmetic, 
\texttt{ca\_t} also supports \emph{nonstop computing}
and allows representing non-finite limiting values.
To this end, the \texttt{ca\_t} type actually represents a set $\mathbb{C}^{**} \supset \mathbb{C}$
comprising numbers as well as
various special values:
\emph{unsigned infinity} ($\tilde \infty$),
\emph{signed infinities} ($c \cdot \infty$),
\emph{undefined} ($\mathfrak{u}$)
and \emph{unknown} ($?$).
Formally, $\mathbb{C}^{*} = \mathbb{C} \cup \{\tilde \infty \}
\cup \{c \cdot \infty : |c| = 1\} \cup \{ \mathfrak{u} \}$ and
$\mathbb{C}^{**} = \mathbb{C}^{*} \cup \{ ? \}$.
The sets $\mathbb{C}^{*}$ and $\mathbb{C}^{**}$ are
easily implemented on top of $\mathbb{C}$:
the \texttt{ca\_t} type encodes
special values by two bits in the field pointer.
Unlike IEEE 754,
we disambiguate two NaN types
(mathematical and computational indeterminacy $\mathfrak{u}$ and ?),
we do not distinguish between $-0$ and $+0$,
and complex infinities are represented in polar rather than rectangular form
(we take $\infty + 2i = \infty$).

$\mathbb{C}^{*}$ is a \emph{singularity closure} of $\mathbb{C}$ in which
we can extend partial functions $f : \mathbb{C} \setminus \operatorname{Sing}(f) \rightarrow \mathbb{C}$ to
total functions $f : \mathbb{C}^{*} \rightarrow \mathbb{C}^{*}$.
For example: $1 / 0 = \tilde \infty$, $\log(0) = -\infty$, and $0 / 0 = \infty - \infty = \mathfrak{u}$.
The definitions are simply a matter of convenience
(this particular choice of singularity closure is largely copied from Mathematica).

$\mathbb{C}^{**}$ is a \emph{meta-extension} of $\mathbb{C}^{*}$
in which algorithms can be guaranteed to terminate.
The meta-value (?) represents an undetermined element of $\mathbb{C}^{*}$.
For example, $1 / x$ evaluates to ? if Calcium cannot decide whether $x = 0$
since the value could be either a number or $\tilde \infty$.
Logical predicates on $\mathbb{C}$ extend to logical predicates on $\mathbb{C}^{*}$
($\infty = \infty$, $\mathfrak{u} = \mathfrak{u}$ and $\mathfrak{u} \ne 3$ are all True)
while predicates on $\mathbb{C}^{**}$ are tripled-valued
($? = 3$, $? = \infty$, $? = \mathfrak{u}$ and $? =\;?$ are all Unknown).

We stress the distinction between numbers and special values:
a \texttt{ca\_t} is explicitly
a number, explicitly a singularity (infinity or undefined),
or explicitly unknown.
It is thus easy to restrict usage strictly to $\mathbb{C}$,
in contrast to many symbolic
computation systems where expressions that represent numbers
are syntactically indistinguishable from expressions that are
singular or undefined.

\section{Related work and benchmarks}

\label{sect:related}

The strategy we have discussed
is essentially an attempt to unify several existing
paradigms for exact computation: \emph{effective real numbers},
\emph{symbolic expressions}, \emph{(embedded) number fields},
and \emph{(embedded) quotient rings}. The novelty is not the combination
of functionality (any general-purpose
computer algebra system supports the requisite operations), but the implementation form and interface.

\subsection{Effective numbers}

There are many implementations of ``effective'' or ``computable'' numbers
which construct a symbolic representation to permit
lazy numerical evaluation to arbitrary precision, \emph{without} the
ability to decide equality~\cite{vdH2006,vdH2006b,Mul2001,Yu2010}.
Our representation is more powerful, but likely inferior
if the only goal is numerical evaluation:
symbolic computations are often slower,
and the rewritten expressions can have worse numerical properties
(they will sometimes be better). We rather view Calcium as a second
option to try if a direct numerical evaluation fails
because it stumbles on an exact comparison.

\subsection{Symbolic and algebraic systems}

Most computer algebra systems arguably belong to one of
two paradigms.
\emph{Algebraic} systems
(Singular~\cite{DGPS2019}, Magma~\cite{BCP1997}, Pari/GP~\cite{Par2019}, SageMath~\cite{Sag2020}, Nemo/Hecke~\cite{Fie2017}, etc.)
are designed for computation
in definite algebraic structures, favoring strong data invariants.
\emph{Symbolic} ones (Mathematica, Maple, Maxima~\cite{Max2020}, SymPy~\cite{Meu2017}, etc.),
are designed around more heuristic manipulation of free-form symbolic expressions.
Roughly speaking, algebraic systems tend to prefer $(x+1)^{100}$ in expanded (normal) form
and view it as an element of a particular ring such as $\mathbb{Q}[x]$,
while symbolic systems tend to leave it unexpanded
and unassociated with a formal algebraic structure.
Calcium is an attempt to
provide a more algebraic package
for functionality for real and complex numbers
previously only found in symbolic systems.
The algebraic approach has benefits for performance and correctness,
although we lose some flexibility:
we notably give up most superficial manipulation of rational expressions
(\emph{expand}, \emph{combine}, \emph{apart}, \emph{factor}, etc.), for better or worse.

Calcium is not a general-purpose expression simplifier
like the \texttt{simplify} or \texttt{FullSimplify} routines in systems
like Maple and Mathematica,
which combine many heuristics.
A roundtrip \emph{expr} $\rightarrow$ Calcium $\rightarrow$ \emph{expr}
can be a useful part in the toolbox of such a simplifier,
but will often have to be applied selectively.
In fact, Calcium grew out of code to manipulate and test
symbolic expressions in FunGrim~\cite{Johansson2020},
with the view of having a middle layer between symbolic expressions
and polynomial and ball arithmetic.

\subsection{Algebraic numbers}

Computing in $\overline{\mathbb{Q}}$ is a well-studied problem
which admits multiple approaches~\cite{Coh1996,Yu2010}.
A generally useful principle is to rely on arithmetic
in fixed number fields for efficiency.
Calcium is partly inspired by
Sage's \texttt{QQbar}, which uses a hybrid representation:
an algebraic number exists either as an
element of a number field $\mathbb{Q}(a)$ or
an unevaluated symbolic expression.
Numerical values are tracked rigorously using interval arithmetic.
A comparison that cannot be resolved numerically
forces a simplification to an absolute field.

Sage's approach has two problems: unevaluated symbolic
expressions fail to capture arithmetic simplifications,
and combining extensions to a single
absolute number field can be costly.
Calcium's multivariate representation often avoids
costly simplifications.
A simple test case is to compute $x = \sqrt{2} + \sqrt{3} + \ldots + \sqrt{p_n}$
and check $x-(x-1)-1 = 0$.
For $n = 7$ this takes 2200~s in Sage and 0.003~s in Calcium, including the
time to set up and clear a context object (Calcium takes 0.00007~s with the fields already cached).\footnote{For a more complex test problem that Calcium handles easily
where Sage struggles unless carefully guided, see: \url{https://ask.sagemath.org/question/52653}}

Calcium is also inspired by the \emph{algebraically closed field}
in Magma,
which uses multivariate quotient rings
$\mathbb{Q}[X_1,\ldots,X_n] / I$
that grow automatically~\cite{Ste2002,Ste2010}.
Magma's \texttt{ACF} is not actually a satisfactory implementation of $\overline{\mathbb{Q}}$
for our purposes
because it does not define the embedding of polynomial roots into $\mathbb{C}$:
choices that depend on permuting roots are made arbitrarily by the system
and cannot be predicted by the user.
On the other hand, Magma uses more sophisticated methods
in its ideal construction than we currently do for algebraic extensions; an interesting
future project would be to integrate some of its techniques with our system.

\subsection{Elementary numbers}

The next interesting structure after $\overline{\mathbb{Q}}$
is the field of \emph{elementary numbers}.
To be precise, there are two common definitions of such a field:
the \emph{exp-log field} $\mathbb{E}$ is the closure of $\mathbb{Q}$
with respect to exponential and logarithmic extensions $e^z$ and $\log(z)$ ($z \ne 0$),
while the \emph{Liouvillian field} $\mathbb{L}$ is
the closure of $\mathbb{Q}$ with respect to algebraic,
exponential and logarithmic extensions.\footnote{Clearly $\overline{\mathbb{Q}} \ne \mathbb{E}$, $\mathbb{E} \subseteq \mathbb{L}$ and $\overline{\mathbb{Q}} \subseteq \mathbb{L}$,
but it is unknown if $\mathbb{E} = \mathbb{L}$ and if $\overline{\mathbb{Q}} \subset \mathbb{E}$.~\cite{Cho1999}}

Richardson~\cite{RF1994, Ric1992, Ric1995, Ric1997, Ric2007, Ric2009} has
constructed a decision procedure for equality in $\mathbb{E}$ and $\mathbb{L}$
using computations in towers of extensions over $\mathbb{Q}$,
which always succeeds
if Schanuel's conjecture is true
and will loop forever when given a counterexample.
The surface part of such a decision procedure is essentially Algorithm~\ref{alg:decision},
in which we iteratively attempt to either prove inequality
or find an algebraic relation that implies equality.
Assuming Schanuel's conjecture, it can be shown that any relation between
elementary numbers must result from a combination of
log-linear relations, exp-multiplicative relations,
and relations resulting from the identical
vanishing of algebraic functions (for example, $\sqrt{(\log(2))^2} - \log(2) = 0$,
due to the identical vanishing of $\sqrt{x^2}-x$ on the local branch).
Algorithm~\ref{alg:ideal} is inspired by Richardson's algorithm,
but incomplete: it will find logarithmic and exponential
relations, but only
if the extension tower is flattened
(in other words, we must avoid
extensions such as $e^{\log(z)}$ or $\sqrt{z^2}$),
and it does not handle all algebraic functions.

Much like the Risch algorithm,
Richardson's algorithm has apparently never been implemented fully.
We presume that Mathematica and Maple use similar
heuristics to ours, but the details
are not documented~\cite{Car2020}, and we do not know to what
extent True/False answers are backed up by a rigorous certification
in those system.

A practical difficulty when comparing numbers involving
elementary functions
is that extremely high precision may be needed
to distinguish nested exponentials numerically (as an
example, consider $\exp(-e^{-e^{N}}) \ne 1$).
This problem can be overcome using asymptotic expansions~\cite{vdH1995}.
We have not yet investigated such methods.
The recent work \cite{Boe2020} uses irrationality criteria
to prove some inequalities,
but this is only applicable in very restricted cases.

\subsection{Miscellaneous examples}

This section is short due to page limits.
For code and additional benchmarks, we refer to
example programs included with Calcium.\footnote{See \url{http://fredrikj.net/calcium/examples.html}
and \url{http://fredrikj.net/blog/2020/09/benchmarking-exact-dft-computation/} for the complete data for the DFT benchmark.}

\subsubsection{Exact DFT}

As a test of basic arithmetic and simplification,
we check $\textbf{x} - \operatorname{DFT}^{-1}(\operatorname{DFT}(\textbf{x})) = \textbf{0}$
where $\textbf{x} = (x_n)_{n=0}^{N-1}$ and $\operatorname{DFT}(\textbf{x}) = \sum_{k=0}^{N-1} \omega^{-kn} x_k$
with $\omega = e^{2 \pi i / N}$. For this benchmark, we evaluate
the DFT by naive $O(N^2)$ summation (no FFT).
We test six input sequences exhibiting both algebraic and transcendental numbers.

We compare with Sage's \texttt{QQbar} (algebraics only), Sage's \texttt{SR} (using GiNaC),
SymPy, Maple, and Mathematica (MMA).
MMA was run on a faster computer in the free Wolfram Cloud, with a 60~s timeout. Other systems were interrupted after $10^3$~s.
Table~\ref{tab:dft} shows select timings;
in most cases, we see order-of-magnitude speedups over
the competing systems.
All timings were done with empty caches;
most systems (including Calcium) run faster a second time,
but comparisons are difficult since
the systems use caches differently.

SymPy fails to prove equality unless $n = 2^k$, and Sage's \texttt{SR} fails except for $n = 2^k, 3, 5, 6, 10, 12, 20$;
Maple (with \texttt{simplify()}) fails on the fourth test sequence for large $n$.
The test case marked (*) only succeeds if we manually disable Gr\"{o}bner bases in Calcium.

{\renewcommand{\arraystretch}{0.9}
\begin{table}
\centering
\caption{Time (s) for exact DFT and zero test.}
\label{tab:dft}
\begin{tabular}{ c c c c c c c c }
$x_{n-2}$  & $N$ & \!\!Sage $\overline{\mathbb{Q}}$\!\! & \!\!Sage \texttt{SR}\!\! & \!\!SymPy\!\! & \!\!Maple\!\! & \!\!MMA\!\! & \!\!Calcium\!\! \\
 \hline
\multirow{4}{*}{$n$} & 6   & 0.020 & 0.047 & fail & 0.016 & 0.078   & 0.00049 \\
  & 8 & 0.033   & 0.11  & 1.1    & 0.0060 &  0.057 & 0.00048 \\
  & 16  & 0.15  & 36    & 9.9    & 0.080 &  0.27    & 0.00068 \\
  & 20  & 0.22  & 124   & fail   & 0.13 &  0.96    & 0.00081 \\
  & 100 & 9.2   & fail  & fail   & 9.1 &  $>60$ & 0.045 \\
\hline
\multirow{4}{*}{$\sqrt{n}$} & 8  &   5.3  &   0.50  &   2.8   & 0.046 &   0.11  &   0.017 \\
 & 16  &   $>10^3$  &   46  &   24  & 0.26 &    0.58   &  0.090 \\
 & 20  &   $>10^3$  &   154 &    fail  & 1.1 &    2.3   &  0.17 \\
 & 100 &    $>10^3$  &   fail  &   fail  & $>10^3$ &    $>60$   &  38 \\
\hline
\multirow{4}{*}{$\log(n)$\!\!\!\!} & 8  &   -  &   0.20  &   1.8   & 0.044 &   0.29  &   0.0059 \\
 & 16  &   -  &   44  &   17 & 0.37 &     0.66  &   0.025 \\
 & 20  &   -  &   136   &  fail  & 0.74 &    45  &   0.046 \\
 & 100  &   - &    fail   &  fail  &  $>10^3$ &   $>60$  &   26 \\
\hline
\multirow{4}{*}{\!\!$e^{2\pi i/n}$\!\!\!\!\!\!} & 8  &   $>10^3$  &   1.3  &   fail  &  0.042 &   0.10  &   0.019 \\
  & 16  &   $>10^3$  &   78   &  $>10^3$  &  0.17 &   0.41  &   0.32 \\
  & 20  &   $>10^3$  &   277   &  fail  &   fail &  $>60$  &   1.1 \\
  & 100 &    $>10^3$  &   fail  &   fail  &  $>10^3$ &   $>60$  &   699* \\
\hline
\multirow{4}{*}{\!\!$\frac{1}{1+n\pi}$\!\!\!\!\!\!} & 8  &   -   &  0.68  &   17  &  0.072 &   0.21  &   0.0041 \\
  & 16  &   -   &  48  &   $>10^3$  &  0.32 &   6.4  &   0.046 \\
  & 20  &   -   &  167   &  fail  &  2.4 &   $>60$  &   0.12 \\
  & 100  &   -  &   fail   &  fail  &  $>10^3$ &   $>60$  &   216 \\
\hline
\multirow{4}{*}{\!\!$\frac{1}{1+\sqrt{n} \pi}$\!\!\!\!\!\!} & 8  &   -  &   0.76  &   22  &  0.074 &   2.6 &    0.082 \\
  & 16  &   -  &   $>10^3$   &  $>10^3$  &  127 &   $>60$  &   8.1 \\
  & 20  &   -  &   fail   &  fail   &  $>10^3$ &  $>60$  &   43 \\
\end{tabular}
\end{table}
}

\subsubsection{Conjugate logarithms}

Example 1 in \cite{BBK2014} asks for simplifying $C_1 = -\frac{1}{8} i \pi \log^{2}(\frac{2}{3} - \frac{2 i}{3}) + \frac{1}{8} i \pi \log^{2}(\frac{2}{3} + \frac{2 i}{3}) + \frac{1}{12} {\pi}^{2} \log(-1 - i) + \frac{1}{12} {\pi}^{2} \log(-1 + i) + \frac{1}{12} {\pi}^{2} \log(\frac{1}{3} - \frac{i}{3}) + \frac{1}{12} {\pi}^{2} \log(\frac{1}{3} + \frac{i}{3})$.
Calcium evaluates this to
$\tfrac{1}{96} (4 \log(\frac{1 - i}{3}) {\pi}^{2} - 8 \log(\frac{-1 - i}{3}) {\pi}^{2} - 5 {\pi}^{3} i)$,
eliminating redundant logarithms.
Calcium does not discover $C_1 = -\tfrac{1}{48} \pi^2 \log(18)$ (the output of Mathematica's \texttt{FullSimplify})
since it does not rewrite the extension numbers, but
it proves equality when this form is given ($C_1 + \tfrac{1}{48} \pi^2 \log(18)$ evaluates to 0).
Calcium takes 0.008~s, or 0.00008~s when the fields are already cached;
Mathematica's \texttt{Simplify} takes 0.015~s
and leaves $C_1$ unsimplified but proves $C_1 + \tfrac{1}{48} \pi^2 \log(18) = 0$ in 0.02~s, while \texttt{FullSimplify} takes 0.03~s.

Maple's \texttt{simplify} returns $-\tfrac{1}{48} \pi^2 (\log(2) + 2 \log(3))$ in 0.000024~s.
Indeed, this is a trivial computation if we (like Maple) atomize
the logarithms.
Since this is not yet implemented in Calcium,
we see the result of the slower, generic approach
using integer relations.

\section{Future work ideas}

We have already discussed several ideas for future work.
Perhaps the most important topics are: new classes
of extension numbers;
simplification, normalization and context-dependent
rewriting of extension numbers; improved numerical algorithms
and methods for working with equivalence classes
of formal fractions; ideal construction.
We conclude by elaborating on the last point.

The usual bottlenecks in constructing ideals (and often in Calcium
as a whole) are: searching for integer relations with LLL,
proving integer relations through recursive computations,
and computing Gr\"{o}bner bases.
Algorithm~\ref{alg:ideal} could be improved in many ways,
most notably through preprocessing
to avoid redundant work and
reduce the dimension or improve numerical
conditioning of LLL matrices.
Some preprocessing strategies are discussed in \cite{BBK2014}.

The PSLQ algorithm is often claimed to be superior to LLL for
integer relations (see for example \cite{BBK2014}),
but this is not obviously true
with modern floating-point LLL implementations.
One benefit of LLL is that we obtain
a matrix of all integer relations at once,
whereas PSLQ has to be run repeatedly to eliminate relations
one by one. We invite further comparison of these algorithms.
In some cases, purely symbolic methods
should be superior to either.

An explicit Gr\"{o}bner basis computation can sometimes be avoided
by setting up extension numbers and relations appropriately.
This is exploited in Magma's algebraically closed field~\cite{Ste2010}.
We have also observed empirically that many calculations in Calcium
work perfectly well (and faster) without computing a Gr\"{o}bner basis,
presumably because the constructed ideal basis is sufficiently triangular
to be effective for reductions (in some cases, we found that
it suffices to compute the Hermite normal form of the LLL output matrices);
a better understanding of this phenomenon would be welcome.

A glaring problem is that when introducing
$n$ extension numbers,
say by adding $\sqrt{2} + \sqrt{3} + \sqrt{5} + \ldots$,
we construct all the intermediate
fields $\mathbb{Q}(a_1)$, $\mathbb{Q}(a_1,a_2)$, \ldots, $\mathbb{Q}(a_1,\ldots,a_n)$,
from scratch.
This is doing nearly $n$ times more work than should be needed.
One possible solution is to let
the user write down a list of extension numbers
and create $\mathbb{Q}(a_1,\ldots,a_n)$ at once for computations.
Another solution is to take advantage of the data that has
already been computed for
$\mathbb{Q}(a_1,\ldots,a_{n-1})$
to generate the data for $\mathbb{Q}(a_1,\ldots,a_n)$.
This seems hard to solve efficiently
and in such a way that the system does not behave
unpredictably depending on the
order of computations.

\bibliographystyle{plain}
\bibliography{references.bib}

\end{document}